        \theoremstyle{plain}\newtheorem {claim}{Claim}
        \theoremstyle{plain}\newtheorem {corollary}{Corollary}
        \theoremstyle{plain}
        \theoremstyle{plain}\newtheorem {proposition}{Proposition}
      	\theoremstyle{plain}\newtheorem {theorem}{Theorem}
      	\theoremstyle{plain}\newtheorem *{theorem*}{Theorem}
        \theoremstyle{definition}\newtheorem {assumption}{Assumption}
        \theoremstyle{definition}
        \theoremstyle{remark} \newtheorem{example}{Example}
        \theoremstyle{remark} 
        \theoremstyle{remark}
\begin{document}

        \bibliographystyle{plainnat}

        \bibpunct{(}{)}{;}{a}{,}{,}

        \title{Affective interdependence and welfare} 

        \author{A. Heifetz \thanks{Open Un. Israel, \href{mailto:aviadhe@openu.ac.il} {\tt aviadhe@openu.ac.il}} \and E. Minelli \thanks{Un. Brescia, \href{mailto: enrico.minelli@unibs.it} {\tt enrico.minelli@unibs.it}} \and H. Polemarchakis \thanks{Un. Warwick,  \href{mailto:h.polemarchakis@warwick.ac.uk} {\tt h.polemarchakis@warwick.ac.uk}}}

        \date{April 30, 2023}

       \maketitle

        \begin{abstract}

  	\textit{Purely affective interaction} allows the welfare of an individual to depend on her own actions and on the profile of welfare levels of others. Under an assumption on the structure of mutual affection that we interpret as ``non-explosive mutual affection,'' we show that equilibria of simultaneous-move affective interaction are Pareto optimal independently of whether or not an induced standard game exists. Moreover, \emph{if} purely affective interaction induces a standard game, then an equilibrium profile of actions is a Nash equilibrium of the game, and this Nash equilibrium and Pareto optimal profile of strategies is locally dominant.

        	\bigskip 
        
       	\bigskip

        {\bf Key words}: purely affective interactions, Pareto optimality.

        \bigskip

        {\bf JEL  classification}: D62.

        	\end{abstract}
   
   	\maketitle

\section{Introduction}

\textit{Purely affective interaction} allows the welfare of an individual to depend on her own actions and on the profile of welfare levels of others. Importantly, actions of others do not affect directly the welfare of an individual. Here, we provide a concise and general treatment of the class of smooth, purely affective interactions.  Affection can be positive or negative. We focus on welfare implications.

\citet*{rayvohra20} demonstrated a striking result:  if a purely affective interaction induces a standard game, Nash equilibria of the induced game are Pareto optimal. Which proved a conundrum, since earlier results, notably by \citet*{arrow81} in gift-giving interaction with three individuals in a simultaneous-move setting and \citet*{pearce08} in a cake-eating game in a sequential-move setting, had concluded that affective interaction is not conducive to optimality. We resolve this conundrum by showing that optimality emerges in settings when interaction is purely affective. When  inefficiency arises, dependence of individual welfare on actions of other individuals lurks in the background either via the game form, as in Arrow's gift giving game\footnote{\citet*{bourlesetal17}} or via aggregate feasibility constraints as in Pearce's cake eating problems or, more generally, competitive economies. Indeed, in a competitive equilibrium setting, \citet*{winter69} and \citet*{sobeletal11} allowed individuals to have preferences over their consumption, and, concurrently, over the profile of utilities. They identified a condition, \textit{social monotonicity} that, under assumptions of monotonicity and convexity in own consumption and, in particular, separability of utilities between own utility and the profile of utilities of others, implies that Pareto optimal allocations can be supported as competitive allocations and can be attained with redistributions of revenue. They also showed by example that social monotonicity \emph{does not} guarantee the efficiency of competitive equilibria. 

An assumption that we maintain throughout is that a transformation of the matrix of mutual affection is a \textit{P-matrix,} as in \citet*{galenikaido65}, which we interpret as non-explosive mutual affection.
Under this assumption, we show that equilibria of simultaneous-move \emph{purely} affective interaction are Pareto optimal independently of whether or not an induced standard game exists. 
Moreover, \emph{if} purely affective interaction induces a standard game, then an equilibrium profile of actions is a Nash equilibrium of the game, and  this Nash equilibrium and Pareto optimal profile of strategies is locally dominant\footnote{Optimality results indicate that games induced by purely affective interaction form a non-generic class within the class of games: Nash equilibria of generic games are suboptimal in \citet*{dubey86}, and, likewise, backward-induction paths of generic sequential games are suboptimal in \citet*{heifetzetal21}.}. For the sequential setting, \citet*{heifetz22} defined backward induction much more simply and directly than in \citet*{pearce08}, and he showed that backward induction paths of actions and utility levels are Pareto optimal, again under the assumptions of non-explosive \emph{purely} affective interaction.

\section{Purely affective interaction}

Individuals are $ i\in I=\left\{  1,...,n\right\}, $ profiles of action are
\[
x=\left(  x\,_{1},...,x_{n}\right)  \in X=\prod_{i=1}^{n}X_{i},
\]
and profiles of utility levels are
\[
u=\left(  u_{1},...,u_{n}\right)  \in\mathbb{R}^{n}\footnote{We employ the standard notation $u\leq\bar{u}$ for $u_{i}\leq\bar{u}_{i}$,
$i=1,..,n$ \ and $u<\bar{u}$ for $u_{i}\leq\bar{u}_{i}$, $i=1,..,n$ with at
least one strict inequality.}.
\] 
Utility functions are
\[
V\left(  x,u\right)  =\left(  V_{i}\left(  x_{i},u_{-i}\right):{i=1,...n} \right),
\]
and we also write
\[
V_{x}\left(  u\right)  =V\left(  x,u\right).
\]

\bigskip

\bigskip

A profile of actions and utility levels  $ (x, u) $ is \textit{consistent} if, for every individual, $ u_{i} $ corresponds to the utility level at $(x_i,  u_{-i}) $ or
\[
V_{x}\left( u \right)  = u.
\]

In a purely affective interaction, an individual controls her action $x_i$,  and is aware of the direct effect of her choice on her well-being. Other individuals' choices do not affect her directly\footnote{Maybe because every externality has been priced -- see the last section of the paper. }, but she cares about the well being of others, and her perception of others' well-being may affect her preferences over her actions. 
The understanding of others' action sets and preferences is too imprecise to allow an individual to calculate the possible \emph{indirect} effect of her choices of their well-being. At a consistent profile of  actions choices and utility levels, perceptions are confirmed.

A  \textit{parametric equilibrium} is a consistent profile of actions and utility levels,  $(x^*, u^*) $ that satisfies individual optimization: every individual maximizes $V_i$ taking $u_{-i}^*$ as given or
\[
V_{x}\left(  u^{\ast}\right)  \leq V_{x^{\ast}}\left(  u^{\ast}\right).
\]


A  profile of actions and utility levels, $\left(  \tilde{x},\tilde{u}\right)  $ is a \textit{Pareto improvement}\ over a profile $\left(  x, u \right)  $ if 
\[
\tilde{u}>u.
\]

A consistent profile of actions and utility levels, $ (x, u) $ is \textit{Pareto optimal} if  it does not permit a consistent Pareto improvement.

\begin{assumption}
For every individual, $X_{i}$ is an open subset of Euclidean space, and the utility function $V_{i}(\cdot, \cdot)  $ is twice continuously differentiable.
\end{assumption}

The Jacobian of $V_{x}$ at $u$ is $J_{x}\left(  u\right).  $ 

\bigskip

A square matrix is  a \textit{P-matrix} if all  its principal minors are positive\footnote{A principal minor is obtained by the elimination of rows and corresponding columns, but, importantly, without transpositions of rows or columns prior to elimination.}. 

\begin{assumption}
For every $x\in X$ and $u\in\mathbb{R}^{n}$, the matrix $(I -  J_{x}\left(  u\right) )$ is a P-matrix.
\end{assumption}

To interpret the assumption, we recall a useful characterization that we use repeatedly.

\bigskip
\noindent
\textbf{Gale-Nikaido Lemma} [\citet*{galenikaido65}, Theorem 2]
\textit{A matrix A is a P-matrix if and only if, for any non-zero $y \in \mathbb{R}^n$, there exists $i \in \{1,2, \ldots n\},$ such that $y_i (Ay)_i > 0$.}

\bigskip

In words, P-matrices do not fully reverse the sign of any non-zero vector. 

\bigskip

In our context, this property allows us to interpret Assumption 2 as an assumption of \textit{non-explosive mutual affection.} To see this, notice that the Gale-Nikaido characterization allows us to rewrite Assumption 2.

For every $x$ and $u$ and all $\Delta u \neq 0,$  there exists an $i,$ such that
\[\Delta u_i > 0 \quad  \text{and} \quad v  \Delta u_i > \sum_{j \neq i} \frac{\partial V_{x, i}(x_i, u)}{\partial u_j}\Delta u_j\]
or
\[\Delta u_i < 0  \quad  \text{and} \quad  \Delta u_i < \sum_{j \neq i} \frac{\partial V_{x, i}(x_i, u)}{\partial u_j}\Delta u_j.\]

Suppose now that we start from a consistent pair $(\hat x, \hat u)$ at which
$$\hat u - V_{\hat x} (\hat u) =0. $$
Then, for any exogenous change in utility levels while holding the profile of actions fixed,  $\Delta u = u - \hat u \neq 0$,  there is one individual $i,$ for whom 
\[ u_i > \hat u_i, \quad \text{and} \quad V_i(\hat x, u) < u_i \]
or 
\[u_i <  \hat u_i \quad \text{and} \quad V_i(\hat x, u) > u_i.\]
That is, under Assumption 2, starting from a consistent pair of actions and utility levels, for any exogenous change in the utility levels there is always one individual whose resulting utility, after the change, does  not reinforce the exogenous change. Thus, Assumption 2 allows for a wide array of positive and negative individual feelings about changes in the well-being of (any subset) of other individuals, but it prevents explosive affective interaction. We shall come back to the interpretation of Assumption 2 later, when discussing alternative, stronger restrictions. Here we  record an important  implication.

\begin{theorem}
Under Assumptions $1$ and $2$, if $\left(  x^{\ast},u^{\ast}\right)  $ is a parametric equilibrium, then the consistent allocation of utility levels, $ u^{\ast} $ is Pareto optimal.
\end{theorem}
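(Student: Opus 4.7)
\medskip

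\noindent\textbf{Proof plan.}
My plan is a proof by contradiction. Suppose $(\tilde{x}, \tilde{u})$ is a consistent profile that Pareto-improves $(x^{\ast}, u^{\ast})$, so $V_{\tilde{x}}(\tilde{u}) = \tilde{u}$ and $\tilde{u} > u^{\ast}$. I would first extract the two boundary facts that drive the argument. Consistency at $(\tilde{x}, \tilde{u})$ gives $V_{\tilde{x}}(\tilde{u}) = \tilde{u}$. Individual optimization at the parametric equilibrium, together with consistency at $(x^{\ast}, u^{\ast})$, gives
\[
V_i(\tilde{x}_i, u^{\ast}_{-i}) \;\leq\; V_i(x^{\ast}_i, u^{\ast}_{-i}) \;=\; u^{\ast}_i
\]
for every $i$, i.e.\ $V_{\tilde{x}}(u^{\ast}) \leq u^{\ast}$. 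Writing $F(u) := u - V_{\tilde{x}}(u)$, these read $F(\tilde{u}) = 0$ and $F(u^{\ast}) \geq 0$; by Assumptions 1--2, $F$ is $C^{2}$ with Jacobian $F'(u) = I - J_{\tilde{x}}(u)$ a P-matrix at every $u$.

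The central step is to promote the pointwise P-matrix property of $F'$ to the finite-difference version of the Gale-Nikaido characterization highlighted by the paper's interpretation of Assumption 2: applied to the consistent pair $(\tilde{x}, \tilde{u})$ with perturbation to $u^{\ast}$, there exists an index $i$ with either $u^{\ast}_i > \tilde{u}_i$ and $V_i(\tilde{x}_i, u^{\ast}_{-i}) < u^{\ast}_i$, or $u^{\ast}_i < \tilde{u}_i$ and $V_i(\tilde{x}_i, u^{\ast}_{-i}) > u^{\ast}_i$. Since $\tilde{u} > u^{\ast}$ precludes the first alternative, some coordinate $i$ must satisfy $u^{\ast}_i < \tilde{u}_i$ and $V_i(\tilde{x}_i, u^{\ast}_{-i}) > u^{\ast}_i$, contradicting $V_{\tilde{x}}(u^{\ast}) \leq u^{\ast}$ established a moment earlier.

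The main obstacle I anticipate is the promotion from pointwise P-matrix to finite-difference form. A naive integrated mean value theorem gives $F(\tilde{u}) - F(u^{\ast}) = A\,(\tilde{u}-u^{\ast})$ with $A = \int_{0}^{1} F'(u^{\ast}+t(\tilde{u}-u^{\ast}))\,dt$, but the integral average of P-matrices need not itself be a P-matrix, so Gale-Nikaido cannot be invoked on $A$ directly. The clean route is to appeal to the global P-function strengthening of Gale-Nikaido's univalence theorem for $C^{1}$ maps whose Jacobian is a P-matrix on a convex domain. As a fallback, one traces the segment $u(t) = u^{\ast} + t(\tilde{u}-u^{\ast})$, isolates the largest $t^{\ast}\in[0,1]$ at which some coordinate of $F(u(t))$ is still strictly positive, and plays the one-sided sign of the directional derivative $\partial_{t} F(u(t^{\ast}))$ against the pointwise Gale-Nikaido characterization applied at $u(t^{\ast})$; the degenerate case $F(u^{\ast})=0$ is then handled separately by Gale-Nikaido's original univalence of $F$, which forces $u^{\ast}=\tilde{u}$ and contradicts $\tilde{u}>u^{\ast}$.
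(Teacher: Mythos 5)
Your proposal follows essentially the same route as the paper's proof: assume a consistent Pareto improvement $(\tilde x,\tilde u)$, set $F(u)=u-V_{\tilde x}(u)$, derive $F(\tilde u)=0\le F(u^{\ast})$ together with $\tilde u>u^{\ast}$ from consistency and individual optimization, and rule this out by the global P-function property of a map whose Jacobian is everywhere a P-matrix on a rectangular domain --- which is precisely the Gale--Nikaido Theorem~3 citation the paper invokes. Your observation that the integrated mean-value shortcut fails (averages of P-matrices need not be P-matrices) is apt; the only shaky part is the \emph{fallback} segment argument, since the Gale--Nikaido index produced at $t^{\ast}$ need not be a coordinate where $F(u(t^{\ast}))$ vanishes, but this does not affect your primary route, which is correct and coincides with the paper's.
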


\begin{proof} Suppose, by way of contradiction, that $\left(
\tilde{x},\tilde{u}\right)  $ Pareto improves on $\left(  x^{\ast},u^{\ast}\right), $
\begin{equation}\label{impr}
V_{\tilde{x}}\left(  \tilde{u}\right)  =\tilde{u}>u^{\ast}. 
\end{equation}
Let $F:\mathbb{R}^{n}\rightarrow\mathbb{R}^{n}$ be defined by
\[
F\left(  u\right)  =u-V_{\tilde{x}}\left(  u\right).
\]
Since $\left(  x^{\ast},u^{\ast}\right)  $ is a parametric equilibrium, 
\[
F\left(  u^{\ast}\right)  =u^{\ast}-V_{\tilde{x}}\left(  u^{\ast}\right)
\geq0,
\]
and 
\begin{equation}\label{pareq}
F\left(  \tilde{u}\right)  =\tilde{u}-V_{\tilde{x}}\left(  \tilde{u}\right)
=0\leq F\left(  u^{\ast}\right). 
\end{equation}
By Assumptions 1 and 2,  the matrix $ (I-J_{\tilde{x}}\left(  u\right)),  $ the Jacobian of $F$, 
is a P-matrix and,  by Theorem 3 of \citet*{galenikaido65},
the inequalities $\left(  \ref{impr} \right)  $ and $\left(  \ref{pareq}\right)  $ cannot obtain
simultaneously for $\tilde{u}\neq u^{\ast}$. 
\end{proof}

\subsection*{The case of linearly separable affection}

In a linearly separable purely affective interaction, the individuals' utility functions
have the form
\begin{equation}\label{lin}
V_i(x_i, u) = f_i(x_i) + \sum_{j \neq i} a_{ij} u_j. 
\end{equation}

At every $x=\left(  x_{1},...,x_{n}\right)  $ the Jacobian $J$ of $V$ with
respect to $u=\left(  u_{1},...,u_{n}\right)  $ has a zero diagonal $J_{ii}=0$
and off-diagonal entries $J_{ij}=a_{ij}$. 

Consistency in this special case  takes the form
\begin{equation}\label{lincon}
u = f(x) + Ju.
\end{equation}
Under Assumption 2,  $\det\left(  I-J\right)  \neq 0, $ and we can uniquely solve the system of equations (\ref{lincon}) at every $x$, thus  obtaining the \textit{induced game} corresponding to (\ref{lin}),
$$
U(x) = (I - J)^ {- 1} f(x)= Bf(x).
$$
The utility functions
$U=\left(  U_{1},...,U_{n}\right)  $ in the  induced game are  linear
combinations of the `base utilities' $f=\left(  f_{1},...f_{n}\right)  $.  The matrix $B= (I - J)^ {- 1} $ summarizes the effect of changes in the base utilities $f(x)$ on the final well being of individuals, taking into account the network of affective interactions between them.

Under Assumption 2,  $B = (I -J)^{-1}$ is also a P-matrix\footnote{\citet*{hornjohnson13}, Theorem 4.3.2}.
In particular,  the diagonal of $B$ is positive, and, for every i, 
\begin{equation}\label{P}
\\ b_{ii} = \frac{\mid (I -J)_{ii} \mid}{Det(I - J)} > 0.
\end{equation}

\begin{claim}
Under Assumption 2, in a situation of linearly separable affection, an action profile $x^*$ is a Nash equilibrium of the induced game $U$  if and only if  $(x^*, U(x^*))$ is a  parametric equilibrium of $V$.
\end{claim}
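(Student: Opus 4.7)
The plan is to unwind both definitions and show they both reduce to the condition that each $x^*_i$ maximizes the base utility $f_i(x_i)$, using only that the affective part is additive and that the diagonal of $B$ is strictly positive.

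First I would note that because $U(x) = Bf(x)$, the $i$th component is
\[
U_i(x) = \sum_{k} b_{ik}\, f_k(x_k),
\]
so when we vary $x_i$ holding $x^*_{-i}$ fixed, only the term $b_{ii}\, f_i(x_i)$ changes. By (\ref{P}) in the paper, Assumption 2 yields $b_{ii} > 0$; hence $x^*_i$ maximizes $U_i(\cdot, x^*_{-i})$ if and only if $x^*_i$ maximizes $f_i$ on $X_i$. So Nash equilibrium of $U$ is characterized by the single condition that, for each $i$, $x^*_i \in \arg\max_{x_i \in X_i} f_i(x_i)$.

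Next I would address the parametric-equilibrium side. Given any $x^*$, the pair $(x^*, U(x^*))$ is consistent by construction of $U$: setting $u^* = U(x^*) = Bf(x^*)$ gives $(I-J)u^* = f(x^*)$, i.e.\ $u^* = f(x^*) + Ju^* = V_{x^*}(u^*)$. For the individual-optimization requirement, the linearly separable form (\ref{lin}) shows that
\[
V_i(x_i, u^*_{-i}) = f_i(x_i) + \sum_{j \neq i} a_{ij} u^*_j,
\]
where the second summand is independent of $x_i$. Therefore $x^*_i$ maximizes $V_i(\cdot, u^*_{-i})$ if and only if $x^*_i$ maximizes $f_i$, giving exactly the same characterization as before.

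Putting these two observations together yields the biconditional. The only place Assumption~2 enters is through the strict positivity of $b_{ii}$, which is what guarantees that passage from $U_i$ back to $f_i$ in the Nash-equilibrium step preserves (and does not reverse) the direction of optimization; the rest is just the fact that in a linearly separable interaction the affective term drops out of the individual's first-order problem. There is no substantial obstacle beyond keeping straight which variables are held fixed on each side of the equivalence.
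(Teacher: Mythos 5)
Your proof is correct and is essentially the argument the paper has in mind: the paper states Claim 1 without a separate proof, but it displays the positivity of $b_{ii}$ in (\ref{P}) immediately beforehand precisely so that, as you observe, both the Nash condition for $U_i$ and the parametric-equilibrium condition for $V_i$ reduce to $x_i^*$ maximizing $f_i$ (this is also why the paper later remarks that in the linearly separable case equilibrium strategies are dominant). Your handling of consistency via $u^*=Bf(x^*)\iff u^*=f(x^*)+Ju^*$ is likewise the intended one.
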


A planner may try to obtain a Pareto improvement over a Nash equilibrium in the induced game $U$ by \emph{jointly} changing the actions  of everybody. In the case of linearly separable affection, a Pareto improvement is a $\Delta f_x $ such that $\Delta U = B \Delta f_x > 0$, while a Nash deviation in the induced game is a $\Delta f^i_x = ( 0, \ldots, df^i, \ldots, 0)$ such that $\Delta U^i = B  \Delta f^i_x> 0$.
If $x^*$ is  a Nash equilibrium of $U$ and $y=\Delta f_{x^*} $  a Pareto improvement, then $By  > 0$ and $ y \neq 0.$ But then,  by  the Gale-Nikaido Lemma, there must exist $i$ such that $y^i = \Delta f^i_{x^*} > 0$ (also, given (\ref{P}), $ B\Delta f^i_{x^*} >0$), a contradiction with $x^*$ being a Parametric equilibrium (also, with $x^*$ being Nash equilibrium). 

\begin{claim}
Under Assumption 2, in a situation of linearly separable affection,  if $x^*$ is a Nash equilibrium of the induced game $U$, then $x^*$  is Pareto optimal.
\end{claim}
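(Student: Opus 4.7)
The plan is to argue by contradiction, leveraging the two pieces already assembled by the author just before the statement: that $B=(I-J)^{-1}$ is itself a P-matrix, and that its diagonal entries $b_{ii}$ are strictly positive.

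First, I would suppose that some profile $\tilde x \in X$ Pareto improves on $x^*$ in the induced game, i.e.\ $U(\tilde x) > U(x^*)$. Setting $y := f(\tilde x) - f(x^*)$, linearity $U=Bf$ yields $By = U(\tilde x) - U(x^*) \geq 0$ componentwise with at least one strict inequality; in particular $y \neq 0$, since $y=0$ would give $By=0$, contradicting the presence of a strict inequality.

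The pivotal step is then to identify \emph{some} coordinate of $y$ that is strictly positive. For an arbitrary matrix this would not follow from $By \geq 0$, but here $B$ is a P-matrix, so the Gale-Nikaido Lemma produces an index $i$ with $y_i (By)_i > 0$; combined with $(By)_i \geq 0$, this forces both $(By)_i > 0$ and $y_i > 0$, i.e.\ $f_i(\tilde x_i) > f_i(x^*_i)$.

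To close, I would have player $i$ deviate unilaterally in the induced game from $x_i^*$ to $\tilde x_i$. Because $f_j$ depends only on $x_j$, this perturbation touches only the $i$-th coordinate of $f$, so the payoff change for $i$ is
\[
U_i(\tilde x_i, x^*_{-i}) - U_i(x^*) = b_{ii}\, y_i > 0,
\]
using $b_{ii} > 0$ from (\ref{P}). That contradicts $x^*$ being a Nash equilibrium of $U$. The only substantive step is the Gale-Nikaido application; the obvious pitfall is to apply it to the wrong matrix, so it is essential that the P-matrix property pass from $I-J$ to its inverse $B$ (the fact cited from Horn and Johnson), and also essential that linear separability make the single-coordinate perturbation of $f$ induced by a unilateral $x_i$ deviation meaningful.
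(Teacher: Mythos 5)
Your proposal is correct and follows essentially the same route as the paper: both suppose a Pareto improvement $y=\Delta f$ with $By>0$, apply the Gale--Nikaido Lemma to the P-matrix $B$ to extract an index $i$ with $y_i>0$, and then use $b_{ii}>0$ from (\ref{P}) to turn $\tilde x_i$ into a profitable unilateral deviation, contradicting Nash equilibrium. Your write-up merely spells out more explicitly the computation $U_i(\tilde x_i,x^*_{-i})-U_i(x^*)=b_{ii}y_i$ that the paper leaves implicit.
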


\bigskip

For given $\lambda \in \mathbb{R}_{+}^{n}\setminus \{0\}$, consider the welfare function
$$
W_\lambda (x)= \lambda U(x) =  \lambda Bf(x), 
$$
a linear combination of the 'base utilities' $f_i(x_i)$. The maximization of each  $f_i(x_i)$ thus assures that the first order conditions for the maximization of $W_\lambda (x)$ are satisfied. 
Let us assume concavity of base utilities.
\begin{assumption}\label{coa}
For every $i$, the function $f_i$ is concave in $x_i$.
\end{assumption}
Even under Assumption 3,  $W_\lambda $  need not be concave in $x$, because Assumption 2 does not guarantee that the elements of $B$ are non-negative. 

Using Farkas's Lemma, \citet*{galenikaido65} (Corollary 2) prove  that
\[
B \;  \text{is a P-matrix}  \;\; \implies \exists \; \lambda \in \mathbb{R}_{++}^{n} \; \; s.t. \; \;  
   \lambda B  > >  0.
\]
Therefore, under Assumptions 2 and 3,  \emph{there exist} welfare weights $\lambda$ such that \emph{for those weights} $W_\lambda$ is  a sum of concave functions, and therefore concave, leading to an alternative proof of Pareto optimality of Nash equilibrium.

\begin{claim}
Under Assumptions 1, 2 and 3, in a situation of linearly separable affection,  if $x^*$ is a Nash equilibrium of the induced game $U$, there exists $\lambda \in \mathbb{R}_{++}^{n}$ such that  $x^*$ is  a global maximum of $W_\lambda$.
\end{claim}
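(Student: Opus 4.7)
The plan is to exploit the separable structure of $W_\lambda$ that Assumption 3 makes possible, and to observe that in the linearly separable setting a Nash equilibrium has an unusually strong property: it maximizes each base utility independently.

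First, I would invoke Corollary 2 of \citet*{galenikaido65}, which is quoted just above the claim, applied to the P-matrix $B=(I-J)^{-1}$. This yields a vector $\lambda\in\mathbb{R}_{++}^{n}$ such that every component of the row vector $\lambda B$ is strictly positive. Writing $c=\lambda B\in\mathbb{R}_{++}^{n}$, I would then expand
\[
W_\lambda(x)=\lambda U(x)=\lambda B f(x)=\sum_{i=1}^{n} c_i\, f_i(x_i),
\]
which displays $W_\lambda$ as an additively separable function of $x_1,\dots,x_n$ with strictly positive coefficients. Because each $f_i$ is concave in $x_i$ by Assumption 3 and $c_i>0$, each summand is concave and therefore $W_\lambda$ is concave on $X$.

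Next I would use the Nash equilibrium condition. In the linearly separable case, $V_i(x_i,u_{-i}^{\ast})=f_i(x_i)+\sum_{j\neq i}a_{ij}u_j^{\ast}$, so the part that depends on $x_i$ is exactly $f_i(x_i)$. The Nash condition therefore says that $x_i^{\ast}$ maximizes $f_i$ on $X_i$, for every $i$. Consequently $x^{\ast}$ simultaneously maximizes each term $c_i f_i(x_i)$ (recall $c_i>0$), so $x^{\ast}$ maximizes the separable sum $W_\lambda$ on $X$. Concavity is not strictly needed for this last step, but it certifies that $x^\ast$ is a global and not just a componentwise maximum.

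I do not expect a serious obstacle: the only delicate point is the existence of the positive-orthant vector $\lambda$ with $\lambda B\gg 0$, and that is the content of the already-quoted Gale–Nikaido corollary. The rest is bookkeeping on the linearly separable form (\ref{lin}), which reduces the joint maximization of $W_\lambda$ to $n$ independent maximizations of the $f_i$'s, each of which is achieved by the corresponding Nash component $x_i^\ast$.
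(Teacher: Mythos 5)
Your proposal is correct and follows essentially the same route as the paper: invoke Corollary 2 of Gale--Nikaido to obtain $\lambda \in \mathbb{R}_{++}^{n}$ with $\lambda B \gg 0$, write $W_\lambda = \sum_i c_i f_i(x_i)$ as a positively weighted sum of the concave base utilities, and note that the Nash condition (via $b_{ii}>0$) forces each $x_i^{\ast}$ to maximize $f_i$ globally. Your parenthetical observation that concavity is then not strictly needed for the final step is accurate, and is a small sharpening of the paper's presentation, which leans on concavity to pass from first-order conditions to a global maximum.
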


\bigskip

\begin{example}[Two person linearly separable affection]

Consider the two person purely affective interaction where
\begin{align*}
V_{1}\left(  x,u_{2}\right)   &  =f\left(  x\right)  +au_{2},\\
V_{2}\left(  y,u_{1}\right)   &  =g\left(  y\right)  +bu_{1},
\end{align*}
with
\[
ab<1.
\]
That is, the two individuals can have positive or negative feelings towards each other, and these feelings may even be strong, but they satisfy the assumption of moderate \emph{reciprocal} affection: if feelings go in the same direction (both love or both spite) they cannot be too strong.

\bigskip

For each $\left(  x,y\right)  $ the Jacobian of the map $V_{\left(
x,y\right)  }:\mathbb{R}^{2}\rightarrow\mathbb{R}^{2}$ is%
\[
J_{\left(  x,y\right)  }=\left(
\begin{array}
[c]{cc}%
0 & a\\
b & 0
\end{array}
\right),
\]
and the matrix $(I - J_{\left(  x,y\right)  })$ has a unitary  diagonal and determinant $\det (I - J_{\left(  x,y\right)  }) =1 - ab > 0$,   and therefore it is a P-matrix\footnote{Such separable interactions in which $\left\vert a\right\vert >1$ (and $\left\vert b\right\vert <\frac
{1}{\left\vert a\right\vert }$) \ do not induce \textquotedblleft a game of
love and hate\textquotedblright\ in the sense of \citet*{rayvohra20}, and
therefore are not covered by their analysis, because their boundedness
condition (i) is not satisfied -- for every $\left(  x,y\right)  $ and every
function $B\left(  x,y\right)  <\infty,$ whenever $\left\vert u_{2}\right\vert
>B\left(  x,y\right)  +\left\vert f\left(  x\right)  \right\vert $, in the
$\sup$ norm $\left\Vert \cdot\right\Vert, $
\[
\left\Vert U\left(  \left(  x,y\right)  ,\left(  u_{1},u_{2}\right)  \right)
\right\Vert \geq\left\vert f\left(  x\right)  +au_{2}\right\vert
\geq\left\vert u_{2}\right\vert -\left\vert f\left(  x\right)  \right\vert
>B\left(  x,y\right).
\]
}.

\bigskip

The induced game is 
\begin{align*}
U_{1}\left(  x,y\right)   &  =\frac{f\left(  x\right)  +ag\left(  y\right)
}{1-ab},\\
U_{2}\left(  x,y\right)   &  =\frac{g\left(  y\right)  +bf\left(  x\right)
}{1-ab},
\end{align*}
whose Nash equilibria (if there are any) are $\left(  x^{\ast},y^{\ast
}\right)  $ where
\[
x^{\ast}\in\arg\max f\left(  x\right)  ,\quad y^{\ast}\in\arg\max g\left(
y\right).
\]
That is, every Nash equilibrium is in dominant strategies.

In particular, if $f$ and $g$ are concave then there exists at most one Nash
equilibrium $\left(  x^{\ast},y^{\ast}\right)$. A social welfare function of
the form
\[
W\left(  x,y\right)  =\lambda_{1}U_{1}\left(  x,y\right)  +\lambda_{2}%
U_{2}\left(  x,y\right),
\]
where $\left(  \lambda_{1},\lambda_{2}\right)  > 0,$ is then concave and
maximized at the unique Nash equilibrium $\left(  x^{\ast},y^{\ast}\right)$  if
and only if 
\[
\frac{1}{1-ab}\left(
\begin{array}
[c]{cc}%
\lambda_{1} & \lambda_{2}%
\end{array}
\right)  \left(
\begin{array}
[c]{cc}%
1 & a\\
b & 1
\end{array}
\right)  >0.
\]
With $ab  <1$ \ and therefore $\frac{1}{1-ab}>0,$ such
$\left(  \lambda_{1},\lambda_{2}\right)  >0$ indeed exists since

(i) if both $a\geq0$ and $b\geq0$ (mutual sympathy) then $U_{1}$ and $U_{2}$
are already concave themselves, the Nash equilibrium $\left(  x^{\ast}%
,y^{\ast}\right)  $ is their global maximum, and any $\left(  \lambda
_{1},\lambda_{2}\right)  >0$ will do,

(ii) if $a\geq0$ but $b<0$ (individual 1 is sympathetic and individual 2 is spiteful)
then $U_{1}$ is concave and the Nash equilibrium $\left(  x^{\ast},y^{\ast
}\right)  $ is its global maximum, while $U_{2}$ is not concave, and $\left(
x^{\ast},y^{\ast}\right)  $ is a saddle point of it. One can then choose
$\lambda_{1}>0$ \ and $\lambda_{2}=0$ to get a concave welfare function $W$
which is maximized at $\left(  x^{\ast},y^{\ast}\right),  $

(iii) similarly, if $b\geq0$ but $a<0$ \ one can choose $\lambda_{2}>0$ \ and
$\lambda_{1}=0$ to the desired effect, and

(iv) finally, if both $a<0$ and $b<0$ (mutual spite) then both $U_{1}$ and
$U_{2}$ are not concave, and the Nash equilibrium $\left(  x^{\ast},y^{\ast
}\right)  $ is a saddle point of each of them. \ Nevertheless, since $ab<1$
(the reciprocal extent of spite is moderate), the vectors
$\left(  1,b\right)  $ and $\left(  a,1\right)  $ are within a half-plane
containing the positive orthant, and therefore both $\left(  1,b\right)  $ and
$\left(  a,1\right)  $ form an acute angle with vectors $\left(  \lambda
_{1},\lambda_{2}\right)  >0$ in some positive cone\footnote{This cone becomes
narrower as $ab\nearrow1$. \ The weights $\left(  \lambda_{1},\lambda
_{2}\right)  >0$ in this cone `strike the right balance' between the
curvatures of $U_{1}$ and $U_{2}$ at $\left(  x^{\ast},y^{\ast}\right)  $ --
between the concavity of $U_{1}$ in $x$ and the convexity of $U_{2}$ in $x$ so
that the linear combination $W$ -- the social welfare function -- is concave
in $x$, and \textit{at the same time} also between the convexity of $U_{1}$ in
$y$ and the concavity of $U_{2}$ in $y$, so that the linear combination $W$ is
concave also in $y$.}.

We therefore conclude that \textit{in all cases}, the Nash equilibrium
$\left(  x^{\ast},y^{\ast}\right)  $ is Pareto optimal.  

\end{example}

\subsection*{Locally induced game}

In the general, non additively separable case, Assumption 2 does not guarantee that at any given $x$ the system of equations 
\begin{equation}\label{ldg}
F_x(u) = u - V_x(u) = 0
\end{equation}
has a solution, so the induced game $U(x)$ need not be well defined everywhere on $X$.

Still,  under Assumption 2, another result of \citet*{galenikaido65}, Theorem 4, implies that \emph{if} a solution $u_x$ of (\ref{ldg}) exists, then it is unique.

Also, again by Assumption 2, $\det (I - J_x(u_x))  \neq 0$, and we can apply  the implicit function theorem to $F: X \times \mathbb{R}^n \to  \mathbb{R}^n$ at $\left(  x, u_x\right) $
to obtain the existence of smooth real-valued utility
functions $U_x\left(  \cdot\right)  =\left(  U_{i}\left(  \cdot\right)  \right)
_{i=1,...,n}$  defined on some neighborhood $\mathcal{O}_x$ of $x$ with
\[
U_x\left(x\right)  =u_x,
\]
and
\begin{equation}\label{focldg}
\frac{\partial U_{i}\left(  x \right)  }{\partial x_{j}}=\frac{\partial
V_{j}\left( x_{i},u_{x -i}\right)  }{\partial x_{j}}\left(  \left(
I-J_{x}\left(  u_x \right)  \right)  ^{-1}\right)  _{ij}.
\end{equation}
We call $U_x : \mathcal{O}_x \to \mathbb{R}^n$ the \textit{locally induced game} of $V$ at $x$.

We now derive analogs of Claims 1 and 2 for the general case of purely affective interactions.

\begin{theorem}
Under Assumptions 1 and 2, if $\left(  x^{\ast},u^{\ast}\right)  $ is a parametric equilibrium of $V$, 
then $x^{\ast}$ is a Nash equilibrium of the locally induced game  $U_{x^*}$.
\end{theorem}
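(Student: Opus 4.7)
My plan is to mirror the proof of Theorem~1 almost verbatim: isolate the offending unilateral deviation in the locally induced game, reduce to the same ``$F(u)=u-V_{\hat{x}}(u)$ with $P$-matrix Jacobian'' setup, and derive the contradiction via Gale--Nikaido.

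Specifically, I argue by contradiction. Suppose some player $j$ has a deviation $\hat{x}_{j}$ in a neighbourhood of $x^{*}_{j}$ with $U_{j}(x^{*}_{-j},\hat{x}_{j})>u^{*}_{j}$. I set $\hat{x}=(x^{*}_{-j},\hat{x}_{j})$ and take $\hat{u}=U_{x^{*}}(\hat{x})$, the unique consistent utility vector at $\hat{x}$ supplied by the implicit function theorem on $\mathcal{O}_{x^{*}}$. By construction $V_{\hat{x}}(\hat{u})=\hat{u}$ and $\hat{u}_{j}>u^{*}_{j}$.

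I then define $F(u)=u-V_{\hat{x}}(u)$, so $F(\hat{u})=0$, and evaluate $F$ at $u^{*}$. For each $k\neq j$, since $\hat{x}_{k}=x^{*}_{k}$, consistency of $(x^{*},u^{*})$ gives $F_{k}(u^{*})=u^{*}_{k}-V_{k}(x^{*}_{k},u^{*}_{-k})=0$; for $k=j$, the parametric-equilibrium inequality $V_{j}(\hat{x}_{j},u^{*}_{-j})\le V_{j}(x^{*}_{j},u^{*}_{-j})=u^{*}_{j}$ yields $F_{j}(u^{*})\ge 0$. Hence $F(u^{*})\ge 0=F(\hat{u})$, with any strict slack confined to coordinate $j$. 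By Assumption~2, $I-J_{\hat{x}}(u)$ is a $P$-matrix, so the same consequence of Gale--Nikaido's Theorem~3 invoked in Theorem~1 supplies an index $i$ with $(\hat{u}_{i}-u^{*}_{i})(F_{i}(\hat{u})-F_{i}(u^{*}))>0$. Because $F_{i}(\hat{u})-F_{i}(u^{*})=-F_{i}(u^{*})\le 0$ for every $i$, strict positivity of the product forces $F_{i}(u^{*})>0$, which the preceding step allows only at $i=j$; and then $\hat{u}_{j}<u^{*}_{j}$, contradicting $\hat{u}_{j}>u^{*}_{j}$.

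The main delicacy is an asymmetry with Theorem~1. There, a Pareto improvement supplied weak dominance in \emph{every} coordinate of $\tilde{u}-u^{*}$; here the deviation hypothesis gives strict slack in only \emph{one} coordinate of $\hat{u}-u^{*}$. The argument closes only because $F(u^{*})$ can likewise be strictly positive only at that same coordinate $j$, so the ``matching-sign'' index delivered by Gale--Nikaido must be $j$. Thus the proof relies on the full same-sign-coordinate consequence of Gale--Nikaido's Theorem~3 (already implicit in Theorem~1's use of it), not merely its global-univalence content.
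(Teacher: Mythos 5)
Your proof is correct and follows essentially the same route as the paper's: the same contradiction setup, the same map $F(u)=u-V_{\hat{x}}(u)$ with $F(\hat{u})=0$, $F_{k}(u^{*})=0$ for $k\neq j$ and $F_{j}(u^{*})\geq 0$, closed by the P-function property of maps with P-matrix Jacobians. The only difference is bibliographic: the ``same-sign-coordinate'' consequence you correctly identify as necessary (and which you rightly note goes beyond what Theorem~1 needed) is cited in the paper as Theorem~20.5 of \citet*{nikaido68} rather than as part of Gale--Nikaido's Theorem~3.
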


\begin{proof} Suppose, by way of contradiction, that $x^{\ast}$ is
not a Nash equilibrium of the locally induced game. Then, for some individual $i,$ there
exists an alternative choice $\tilde{x}_{i}\in X_{i}$ for which the locally induced
game is defined at
\[
x=\left(  \tilde{x}_{i},x_{-i}^{\ast}\right),
\]
and
\[
\tilde{u}_{i}=U_{i}\left(  x\right)  >U_{i}\left(  x^{\ast}\right)
=u_{i}^{\ast}
\]
Where, to simplify notation, use $U$ for the locally induced game $U_{x^*}$.  

\ Let
\[
\tilde{u}=U\left(  x\right)  =\left(  V_{j}\left(  x_{j},\tilde{u}%
_{-j}\right)  \right)  _{j=1}^{n},
\]
\ and let $F:\mathbb{R}^{n}\rightarrow\mathbb{R}^{n}$ be defined by
\[
F\left(  u\right)  =u-V_{x}\left(  u\right).
\]
By the definition of $\tilde{u},$ 
\[
F\left(  \tilde{u}\right)  =0.
\]
Also, for $j\neq i$, since $x_{j}=x_{j}^{\ast},$ 
\[
F_{j}\left(  u^{\ast}\right)  =0.
\]
At the same time, since $\left(  x^{\ast},u^{\ast}\right)  $ is a parametric
equilibrium and $x_{i}\neq x_{i}^{\ast},$
\[
F_{i}\left(  u^{\ast}\right)  \geq0.
\]
Altogether,
\[
\left(  u_{k}^{\ast}-\tilde{u}_{k}\right)  \left(  F_{k}\left(  u^{\ast
}\right)  -F_{k}\left(  \tilde{u}\right)  \right)  \leq0,\qquad k=1,...,n.
\]
However, since by Assumption 2 the Jacobian $I-J_{x}\left(  u\right)  $ of
$F\left(  u\right)  $ is a P-matrix for every $u\in\mathbb{R}^{n}$, by theorem
20.5 in \citet*{nikaido68}, this set of inequalities cannot obtain for $\tilde
{u}\neq u^{\ast}.$

\end{proof}

\bigskip

\bigskip

\emph{If} the induced game is well defined everywhere, the converse of Theorem 2 holds as well:

\begin{theorem} Suppose the induced game $U$ is defined on the entirety of $X$.
Under Assumptions 1 and 2, if $x^{\ast}$ is a Nash equilibrium of the induced game, then $\left(
x^{\ast},U\left(  x^{\ast}\right)  \right)  $ is a parametric equilibrium.
\end{theorem}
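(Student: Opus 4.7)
I plan to argue by contradiction, closely mirroring the structure used in the proof of Theorem 2: assume $(x^*, U(x^*))$ fails to be a parametric equilibrium, construct an auxiliary map $F$, and invoke the P-matrix property together with Nikaido's Theorem 20.5 to reach a contradiction.

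Write $u^* = U(x^*)$. Consistency of the induced game at $x^*$ gives $V_{x^*}(u^*) = u^*$. To refute parametric equilibrium, suppose some individual $i$ has a deviation $\tilde x_i \in X_i$ with
\[
V_i(\tilde x_i, u^*_{-i}) > u^*_i.
\]
Set $\tilde x = (\tilde x_i, x^*_{-i})$ and, by assumption that the induced game is defined on all of $X$, let $\tilde u = U(\tilde x)$, the unique solution of $V_{\tilde x}(u) = u$. Define $F \colon \mathbb{R}^n \to \mathbb{R}^n$ by $F(u) = u - V_{\tilde x}(u)$, so $F(\tilde u) = 0$ by construction.

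Next I would evaluate $F$ at $u^*$ coordinatewise. For $j \neq i$, since $\tilde x_j = x^*_j$ and $V_j$ depends on $x$ only through $x_j$, consistency at $x^*$ yields
\[
F_j(u^*) = u^*_j - V_j(x^*_j, u^*_{-j}) = 0.
\]
For the $i$-th coordinate, the assumed deviation gives
\[
F_i(u^*) = u^*_i - V_i(\tilde x_i, u^*_{-i}) < 0.
\]
The Nash condition $U_i(\tilde x) \leq U_i(x^*)$ translates to $\tilde u_i \leq u^*_i$, hence $u^*_i - \tilde u_i \geq 0$. Moreover $u^* \neq \tilde u$, because $u^* = \tilde u$ would force $V_i(\tilde x_i, u^*_{-i}) = V_i(\tilde x_i, \tilde u_{-i}) = \tilde u_i = u^*_i$, contradicting $F_i(u^*) < 0$.

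Putting this together, for every $k$,
\[
(u^*_k - \tilde u_k)\bigl(F_k(u^*) - F_k(\tilde u)\bigr) \leq 0,
\]
since the $k \neq i$ terms vanish and the $k = i$ term is the product of a nonnegative factor and a strictly negative factor. But by Assumption 2 the Jacobian of $F$ is a P-matrix on all of $\mathbb{R}^n$, so Theorem 20.5 in \citet*{nikaido68} (the same tool used to close Theorem 2) forbids this for $u^* \neq \tilde u$. This contradiction shows no profitable deviation exists, so $(x^*, u^*)$ is a parametric equilibrium.

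The only delicate point, and the reason one cannot simply quote Theorem 2 in reverse, is getting the sign of $F_i(u^*)$ to align with the sign of $u^*_i - \tilde u_i$ supplied by the Nash inequality; this is where the hypothesis that $U$ is globally defined matters, because without $\tilde u = U(\tilde x)$ there is no vector at which $F$ vanishes to compare against $u^*$ via the P-matrix criterion.
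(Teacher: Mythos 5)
Your proposal is correct and follows essentially the same route as the paper: contradiction via a profitable deviation, the auxiliary map $F(u)=u-V_{\tilde x}(u)$ evaluated at $u^*$ and at $\tilde u = U(\tilde x)$, and Nikaido's Theorem 20.5 applied to the P-matrix Jacobian. In fact your writeup is the cleaner one: the paper's own text appears to carry copy-paste artifacts from Theorem 2 (it asserts $F_i(u^*)\geq 0$ ``since $(x^*,u^*)$ is a parametric equilibrium,'' which is the negation of its contradiction hypothesis, and never explicitly invokes the Nash property of $x^*$ to obtain $u^*_i-\tilde u_i\geq 0$), whereas you correctly get $F_i(u^*)<0$ from the deviation and the nonnegativity of $u^*_i-\tilde u_i$ from the Nash inequality, which is exactly the sign alignment the argument needs.
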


\begin{proof} Denote $u^{\ast}=U\left(  x^{\ast}\right)  $ and
suppose, by way of contradiction, that \newline  $\left(  x^{\ast},u^{\ast}\right)  $ is
not a parametric equilibrium. Then, for some individual $i,$ there exists an
alternative choice $\hat{x}_{i}\in X$ with
\[
\hat{u}_{i}=V_{i}\left(  \hat{x}_{i},u_{-i}^{\ast}\right)  >
V_{i}\left(x_{i}^{\ast},u_{-i}^{\ast}\right)  =u_{i}^{\ast}.
\]
\ By assumption, the induced game is defined at
\[
x=\left(  \hat{x}_{i},x_{-i}^{\ast}\right).
\]
Let \
\[
\hat{u}=U\left(  x\right)  =\left(  V_{j}\left(  x_{j},\hat{u}_{-j}\right)
\right)  _{j=1}^{n}
\]
and let $F:\mathbb{R}^{n}\rightarrow\mathbb{R}^{n}$ be defined by
\[
F\left(  u\right)  =u-V_{x}\left(  u\right).
\]
By the definition of $\hat{u},$
\[
F\left(  \hat{u}\right)  =0.
\]
For $j\neq i$, since $x_{j}=x_{j}^{\ast},$
\[
F_{j}\left(  u^{\ast}\right)  =0.
\]
At the same time, since $\left(  x^{\ast},u^{\ast}\right)  $ is a parametric
equilibrium, and $x_{i}\neq x_{i}^{\ast}$,
\[
F_{i}\left(  u^{\ast}\right)  \geq0.
\]
Altogether,
\[
\left(  u_{k}^{\ast}-\hat{u}_{k}\right)  \left(  F_{k}\left(  u^{\ast}\right)
-F_{k}\left(  \hat{u}\right)  \right)  \leq0,\qquad k=1,...,n.
\]
However, since by Assumption 2 the Jacobian $I-J_{x}\left(  u\right)  $ of
$F\left(  u\right)  $ is a P-matrix for every $u\in\mathbb{R}^{n}$, by theorem
20.5 in Nikaido (1968) this set of inequalities cannot obtain for $\hat{u}\neq
u^{\ast}.$
\end{proof}

Theorems 1 and 3 imply the analog of Claim 2:

\begin{corollary}
Suppose the induced game $U$ is defined on the entirety of $X$. Under Assumptions 1 and 2, if $x^*$ is a Nash equilibrium of the induced game $U$, then it  is Pareto optimal.
\end{corollary}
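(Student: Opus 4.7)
The plan is to chain Theorems 3 and 1 in a one-step argument. Because the induced game $U$ is assumed to be defined on the entirety of $X$, Theorem 3 applies directly: any Nash equilibrium $x^{\ast}$ of $U$ yields a consistent pair $(x^{\ast}, U(x^{\ast}))$ that is a parametric equilibrium of $V$. Then Theorem 1 applied to this parametric equilibrium says that $U(x^{\ast})$ is Pareto optimal, meaning there is no consistent profile $(\tilde{x}, \tilde{u})$ with $\tilde{u} > U(x^{\ast})$.

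To close the statement, I would spell out that this non-existence is exactly what ``$x^{\ast}$ is Pareto optimal'' asks for. Since $U$ is defined on all of $X$ and, by Assumption 2 together with the uniqueness part of the Gale--Nikaido theorem invoked in the discussion of the locally induced game, each $\tilde{x} \in X$ determines the unique consistent utility profile $\tilde{u} = U(\tilde{x})$, any putative Pareto improvement at the level of actions corresponds to a consistent pair $(\tilde{x}, U(\tilde{x}))$ with $U(\tilde{x}) > U(x^{\ast})$. Theorem 1 has just ruled this out.

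There is essentially no obstacle here: the corollary is a direct composition of Theorems 3 and 1, with all the real work already done in those two proofs via the P-matrix machinery of \citet*{galenikaido65} and \citet*{nikaido68}. The only point worth stating explicitly is the translation between ``Pareto optimal action profile $x^{\ast}$'' and ``no consistent Pareto improvement over the parametric equilibrium $(x^{\ast}, U(x^{\ast}))$,'' which is immediate once the induced game is globally well-defined.
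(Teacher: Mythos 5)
Your proposal is correct and matches the paper exactly: the paper derives this corollary by simply composing Theorem 3 (a Nash equilibrium of the globally defined induced game gives a parametric equilibrium $(x^{\ast},U(x^{\ast}))$) with Theorem 1 (every parametric equilibrium is Pareto optimal). Your additional remark that uniqueness of the consistent utility profile at each $\tilde{x}$ makes the two notions of Pareto optimality coincide is a sensible, if routine, clarification that the paper leaves implicit.
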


As a last remark on the special structure of purely affective interaction, we prove that, while in the linearly separable case  parametric equilibrium strategies are by construction dominant strategies, even in the general case parametric equilibrium strategies are \emph{locally} dominant.

\begin{theorem}
At a parametric equilibrium $\left(  x^{\ast},u^{\ast}\right),  $
each individual's action is locally dominant in the locally induced game $U_{x^*}$.
\end{theorem}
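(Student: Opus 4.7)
The approach is to evaluate the formula (\ref{focldg}) for the gradient of each locally induced payoff $U_i$ at the parametric equilibrium $(x^{\ast},u^{\ast})$, and to observe that---peculiar to the purely affective setting---it allows every individual's own first-order condition to collapse the whole gradient $\nabla U_i(x^{\ast})$ to zero. This is the content of local dominance: the Nash first-order condition $\partial U_i/\partial x_i=0$ is accompanied, for every opponent $j\neq i$, by $\partial U_i/\partial x_j=0$, so that player $i$'s optimality of $x_i^{\ast}$ persists, to first order, under arbitrary local perturbations of $x_{-i}$ away from $x_{-i}^{\ast}$.

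The key step is to apply (\ref{focldg}) at $x=x^{\ast}$:
\[
\frac{\partial U_{i}(x^{\ast})}{\partial x_{j}} \,=\, \frac{\partial V_{j}(x_j^{\ast},u_{-j}^{\ast})}{\partial x_{j}}\, \bigl((I-J_{x^{\ast}}(u^{\ast}))^{-1}\bigr)_{ij}.
\]
The crucial feature is that the first factor depends only on \emph{player $j$'s own} direct marginal utility with respect to \emph{her own} action. Since $(x^{\ast},u^{\ast})$ is a parametric equilibrium, player $j$ maximizes $V_j(\cdot,u_{-j}^{\ast})$ over the open set $X_j$, and Assumption 1 delivers the first-order condition $\partial V_j(x_j^{\ast},u_{-j}^{\ast})/\partial x_j=0$. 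Hence $\partial U_i(x^{\ast})/\partial x_j=0$ for every ordered pair $(i,j)$, not only for $j=i$.

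The case $j=i$ recovers the Nash first-order condition already guaranteed by Theorem 2; the cases $j\neq i$ contain the novel content. Should one wish to push the conclusion one order further---to show that player $i$'s best response in the locally induced game is stationary at $x_{-i}^{\ast}$---one differentiates again and uses the fact that, by the implicit function theorem, $\partial u_k(x^{\ast})/\partial x_\ell$ is proportional to $\partial V_\ell(x_\ell^{\ast},u_{-\ell}^{\ast})/\partial x_\ell=0$, so that the cross partial $\partial^{2}U_i(x^{\ast})/\partial x_i\partial x_j$ vanishes as well.

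I anticipate no substantive obstacle; the proof reduces to a direct evaluation of an already-established formula. The only conceptual point worth emphasizing is the asymmetric appearance of $\partial V_j/\partial x_j$---rather than $\partial V_i/\partial x_j$---on the right-hand side of (\ref{focldg}): this is peculiar to purely affective interaction, in which each $V_j$ depends directly on $x_j$ alone, and it is precisely this feature that allows every player's own first-order condition to feed into the gradient of every $U_i$, a phenomenon that would fail in a generic smooth game.
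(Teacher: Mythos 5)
Your central inference is not valid: the fact that the full gradient $\nabla U_i(x^{\ast})$ vanishes does not imply that $x_i^{\ast}$ remains optimal for player $i$ when $x_{-i}$ is perturbed. Local dominance is the statement that $x_i^{\ast}$ remains a local maximizer of $U_i(\cdot,x_{-i})$ for $x_{-i}$ near $x_{-i}^{\ast}$; its first-order content is that $\partial U_i/\partial x_i$ \emph{stays} equal to zero as $x_j$ moves, i.e.\ that the cross partial $\partial^2 U_i(x^{\ast})/\partial x_i\,\partial x_j$ vanishes for $j\neq i$ --- not that $\partial U_i(x^{\ast})/\partial x_j$ vanishes. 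The latter only says that player $i$ is locally indifferent, at the single point $x^{\ast}$, to small changes in $x_j$; it says nothing about where her best reply sits afterwards. A one-line counterexample: $U_i(x_i,x_j)=-(x_i-x_j)^2$ has $\nabla U_i=0$ at the origin, yet the best reply $x_i=x_j$ moves one-for-one with $x_j$, so $x_i=0$ is not locally dominant there. Your observation that the entire Jacobian of $U$ vanishes at $x^{\ast}$ is correct (every player $j$'s first-order condition $\partial V_j(x_j^{\ast},u_{-j}^{\ast})/\partial x_j=0$ plugged into (\ref{focldg}) kills every entry), and it is an interesting fact --- it makes $x^{\ast}$ a simultaneous critical point of all the $U_i$, consistent with Pareto optimality --- but it is not the theorem.

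The good news is that the step you relegate to an afterthought (``should one wish to push the conclusion one order further'') is in fact the entire proof, and it is exactly the paper's argument: differentiate $\partial U_i/\partial x_i=\bigl(\partial V_i/\partial x_i\bigr)\bigl(\bigl(I-J_{x}(u_x)\bigr)^{-1}\bigr)_{ii}$ with respect to $x_j$ and observe that both product-rule terms vanish at $x^{\ast}$ --- one because of player $i$'s own first-order condition, the other because $V_i$ has no direct dependence on $x_j$ and, as you yourself note, the indirect dependence through $u_x$ is itself annihilated by player $j$'s first-order condition via (\ref{focldg}). If you promote that remark to the main argument and drop the claim that the vanishing gradient ``is the content of local dominance,'' the proof is correct and coincides with the paper's.
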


\begin{proof} At a parametric 
equilibrium, for each individual, 
\begin{equation}\label{foc}
\frac{\partial V_{i}\left(  x_{i}^{\ast},u_{-i}^{\ast}\right)  }{\partial
x_{i}}=0,
\end{equation}
and therefore, by $(\ref{focldg})$ with $j=i$, the induced utility
function $U_{i}$ is flat as a function of $x_{i}$ at $x^{\ast}$. Moreover, for
$j\neq i,$ it follows, again from $(\ref{focldg})$, that
\[
\begin{array}{c}
\frac{\partial U_{i}\left(  x^{\ast}\right)  }{\partial x_{i}\partial x_{j}} = \\ \\
\frac{\partial\left(  \frac{\partial U_{i}\left(  x^{\ast}\right)
}{\partial x_{i}}\right)  }{\partial x_{j}}=\frac{\partial\left(  \left(
\frac{\partial V_{i}\left(  x_{i}^{\ast},u_{-i}^{\ast}\right)  }{\partial
x_{i}}\right)  \left(  \left(  I-J_{x^{\ast}}\left(  u^{\ast}\right)  \right)
^{-1}\right)  _{ii}\right)  }{\partial x_{j}} = \\ \\
\frac{\partial V_{i}\left(  x_{i}^{\ast},u_{-i}^{\ast}\right)  }{\partial
x_{i}}\frac{\partial\left(  \left(  \left(  I-J_{x^{\ast}}\left(  u^{\ast
}\right)  \right)  ^{-1}\right)  _{ii}\right)  }{\partial x_{j}}%
+\frac{\partial V_{i}\left(  x_{i}^{\ast},u_{-i}^{\ast}\right)  }{\partial
x_{i}\partial x_{j}}\left(  \left(  I-J_{x^{\ast}}\left(  u^{\ast}\right)
\right)  ^{-1}\right)  _{ii}=0,
\end{array}
\]
where the last equality is due to the first order condition (\ref{foc}) coupled with the fact that
\[
\frac{\partial V_{i}\left(  x_{i}^{\ast},u_{-i}^{\ast}\right)  }{\partial
x_{i}\partial x_{j}}=0
\]
since $V_{i}$ does not depend on $x_{j}$. With\ a marginal change in $x_{j}$ from $x^{\ast}$, the function
$U_{i}$\ remains constant as a function of $x_{i}$, and $x_{i}^{\ast}$
therefore remains a local maximizer of $U_{i}$. 
\end{proof}

\subsection*{Stronger conditions}

Stronger conditions imply that $\left(  I-J_{x}\left(  u\right)  \right)  $ is a P-matrix.

\subsubsection*{Spectral radius less than one}

If the induced game $U$ is defined at $x,$ i.e. if $\left(  x,U\left(
x\right)  \right)  $ is consistent, then, by definition, $U\left(  x\right)
=V_{x}\left(  U\left(  x\right)  \right)  ,$ and therefore also $U\left(
x\right)  =V_{x}^{k}\left(  U\left(  x\right)  \right)  $ for every $k\geq
1$\footnote{\noindent$V_{x}^{k}$ is defined inductively by
\[
V_{x}^{1}\left(  u\right)  =V_{x}\left(  u\right)  ,\quad V_{x}^{k}\left(
u\right)  =V_{x}\left(  V_{x}^{k-1}\left(  u\right)  \right)
\]
for $k>1$.}. 
Moreover, given Assumption 2, the first equality holds
\textit{only} for $U\left(  x\right).  $By \citet*{galenikaido65}Theorem
4, if $u\neq U\left(  x\right)  $ then $V_{x}\left(  u\right)  \neq u$. \ 

Now, if $u$ is some small perturbation of $U\left(  x\right)  $, representing
a slight\ mis-assessment of the players regarding each other's utility levels
with the action profile $x,$ would the repeated re-assessments $V_{x}\left(
u\right)  ,$ $V_{x}\left(  V_{x}\left(  u\right)  \right)  ,...,V_{x}%
^{k}\left(  u\right)  ,...$ converge back towards $U\left(  x\right)  $?
\ This is a plausible requirement, because otherwise $U\left(  x\right)  $ is
an \textit{unstable} rest-point of $V_{x}$, and the definition of the induced
game $U$ is not robust to slight misperceptions.

The required convergence \
\[
V_{x}^{k}\left(  u\right)  \underset{k\rightarrow\infty}{\rightarrow}U\left(
x\right)
\]
is guaranteed in some small enough neighborhood of $U\left(  x\right).$ That is, 
$U\left(  x\right)  $ is an asymptotically stable fixed point of $V_{x}$ if
the spectral radius of $J_{x}\left(  U\left(  x\right)  \right)  $ \ (the
largest of the absolute values of its eigenvalues), denoted $\rho\left(
J_{x}\left(  U\left(  x\right)  \right)  \right)  ,$ satisfies\footnote{\citet*{galor07}, Theorem 4.8}
\[
\rho\left(  J_{x}\left(  U\left(  x\right)  \right)  \right)  <1,
\]
whereas if, in contrast, $\rho\left(
J_{x}\left(  U\left(  x\right)  \right)  \right)  >1$ and no eigenvalue of
$J_{x}\left(  U\left(  x\right)  \right)  $ has absolute value equal to 1,
then $V_{x}$ is not asymptotically stable, and diverges away from arbitrarily
small perturbations of $U\left(  x\right)  .$

\bigskip

In fact, the above re-assessments may take place among any subset
$I_{0}\subseteq I$ of the individuals, for fixed utility levels $\bar
{u}=\left(  \bar{u}_{j}\right)  _{j\in I\setminus I_{0}}$ of the remaining
individuals. The purely affective interaction $V$ defines a \emph{purely
affective sub-interaction} $V^{\bar{u}}$ among the individuals in $I_{0},$
\[
V^{\bar{u}}\left(  x,u\right)  =\left(  V_{i}\left(  x,u,\bar{u}\right),
\right)  _{i\in I_{0}},%
\]
where $x=\left(  x_{i}\right)  _{i\in I_{0}}$ \ and $u=\left(  u_{i}\right)
_{i\in I_{0}}$. The set of \emph{purely affective sub-interactions} of $V$ is
thus defined by ranging over all the non-empty subsets of individuals
$I_{0}\subseteq I$ \ and utility levels $\bar{u}=\left(  \bar{u}_{j}\right)
_{j\in I\setminus I_{0}}$ of the other individuals.

\bigskip

\begin{assumption}
For every $x\in X$\textit{ and }$u\in
R^{n},$%
\[
\rho\left(  J_{x}\left(  u\right)  \right)  <1,
\]
and the same holds for all the sub-interactions of $V$.
\end{assumption}

\bigskip

This assumption implies our Assumption 2.

\bigskip

\begin{proposition}
Under Assumption 4, for every $x\in X$ and $u\in\mathbb{R}^{n}$,  \newline $\left(
I-J_{x}\left(  u\right)  \right)  $ is a P-matrix.
\end{proposition}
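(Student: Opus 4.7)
The plan is to unwind the definition of a P-matrix, use the sub-interaction clause of Assumption 4 to control each principal submatrix of $J_x(u)$ separately, and then read off positivity of the corresponding principal minor of $I - J_x(u)$ from the spectrum.

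Fix $x\in X$ and $u\in\mathbb{R}^{n}$ and write $M=J_{x}(u)$. To show that $I-M$ is a P-matrix I would fix an arbitrary nonempty $I_{0}\subseteq I$ and show that the principal minor
\[
\det\bigl((I-M)_{I_{0},I_{0}}\bigr)=\det\bigl(I_{|I_{0}|}-M_{I_{0},I_{0}}\bigr)
\]
is positive. The key observation is that $M_{I_{0},I_{0}}$ is precisely the Jacobian, with respect to $(u_{i})_{i\in I_{0}}$, of the purely affective sub-interaction $V^{\bar u}$ obtained by holding $\bar u=(u_{j})_{j\in I\setminus I_{0}}$ fixed, evaluated at the current $(x,u)$. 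Assumption 4 applies not only to $V$ but also, by explicit hypothesis, to every such sub-interaction, hence $\rho(M_{I_{0},I_{0}})<1$.

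From here the proof is purely spectral. Let $\lambda_{1},\dots,\lambda_{k}$ be the eigenvalues of $M_{I_{0},I_{0}}$ (with $k=|I_{0}|$), listed with multiplicity over $\mathbb{C}$; the eigenvalues of $I_{k}-M_{I_{0},I_{0}}$ are then $1-\lambda_{\ell}$, so
\[
\det\bigl(I_{k}-M_{I_{0},I_{0}}\bigr)=\prod_{\ell=1}^{k}(1-\lambda_{\ell}).
\]
Real eigenvalues contribute factors $1-\lambda_{\ell}>0$, since $|\lambda_{\ell}|<1$; complex eigenvalues come in conjugate pairs $\lambda,\bar\lambda$ (because $M$ is real) and contribute $(1-\lambda)(1-\bar\lambda)=|1-\lambda|^{2}>0$. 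Hence the product is strictly positive, and the principal minor indexed by $I_{0}$ is positive. Since $I_{0}$ was arbitrary, $I-M$ is a P-matrix.

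The only real substantive point, and the one I would flag as the main step, is the identification of the principal submatrix $M_{I_{0},I_{0}}$ with the Jacobian of a sub-interaction: this is exactly why Assumption 4 had to be stated for all sub-interactions of $V$ rather than for $V$ alone, because the spectral radius of a principal submatrix is not generally bounded by the spectral radius of the full matrix. Once that identification is made, the conversion of $\rho(\cdot)<1$ into positivity of $\det(I-\cdot)$ via pairing of complex conjugate eigenvalues is the standard and routine part.
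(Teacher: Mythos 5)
Your proof is correct and follows essentially the same route as the paper: you invoke the sub-interaction clause of Assumption 4 to bound the spectral radius of each principal submatrix $J_x(u)_{I_0,I_0}$, and then obtain positivity of $\det\bigl(I-J_x(u)_{I_0,I_0}\bigr)$ by writing it as $\prod_\ell(1-\lambda_\ell)$ and pairing complex conjugate eigenvalues. If anything, your direct computation of the product is a cleaner version of the paper's detour through the signs of the characteristic polynomial's coefficients, and you correctly highlight the one substantive point (the identification of principal submatrices with Jacobians of sub-interactions) that makes the ``for all sub-interactions'' clause of Assumption 4 indispensable.
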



\begin{proof} \ $\rho\left(  J_{x}\left(  u\right)  \right)  <1$
implies that all the eigenvalues $\lambda_{1},...,\lambda_{n}$ of
$J_{x}\left(  u\right)  $ are within the open unit disk around the origin of
the complex plane, and therefore that so are $-\lambda_{1},...,-\lambda_{n},$
which are the eigenvalues of $-J_{x}\left(  u\right)  $. Hence $1-\lambda
_{1},...,1-\lambda_{n},$ which are the eigenvalues of $I-J_{x}\left(
u\right)  ,$ all have positive real parts. These eigenvalues are the roots of
the characteristic polynomial of $I-J_{x}\left(  u\right)  $. This
characteristic polynomial has positive coefficients, and therefore its roots
are all either real, and therefore positive by the above, and/or come in
conjugate pairs of the form $c+di$ and $c-di$ whose product $c^{2}+d^{2}$ is
also positive. Hence the determinant of $I-J_{x}\left(  u\right)  ,$ which is
the product of its eigenvalues, is positive. \ 

All the above is true also for every sub-interaction involving only the subset
$I_{0}$ of individuals, implying the positivity of the determinant of the
principal submatrix of $I-J_{x}\left(  u\right)  $ with rows and columns in
$I_{0},$ i.e. the positivity of the principal minor with rows and columns in
$I_{0}$. \ We thus conclude that $I-J_{x}\left(  u\right)  $ is a P-matrix.
\end{proof}

\bigskip

\noindent\textbf{Remark. }The conclusion of proposition 1, i.e. Assumption 2,
is weaker than its premise, Assumption 4. For example, in the case of two
individuals, denoting
\[
J_{x}\left(  U\left(  x\right)  \right)  =\left(
\begin{array}
[c]{cc}%
0 & a\\
b & 0
\end{array}
\right)
\]
$\left(  I-J_{x}\left(  U\left(  x\right)  \right)  \right)  $ being a
P-matrix means $ab<1,$ whereas $\rho\left(  J_{x}\left(  U\left(  x\right)
\right)  \right)  <1$ means the more stringent requirement $\left\vert
ab\right\vert <1$. \ 

\bigskip

If $ab<-1$ then Assumption 2 holds, but Assumption 4 does not. \ In this case
the eigenvalues of $J_{x}\left(  U\left(  x\right)  \right)  $ are $\pm
\sqrt{ab}$, whose absolute values are both larger than 1, and therefore
$V_{x}$ diverges away from $U\left(  x\right)  $ from arbitrarily small
neighborhoods of $U\left(  x\right)  $.

\subsubsection*{Dominant Diagonal}

Another property of the matrix $(I - J_x(u)$ that we may consider is that the matrix is \emph{dominant diagonal}:

\begin{assumption}
For every $x$ and $u$, the matrix $(I - J_x(u))$ is dominant diagonal: there exists $ h(u) \in \mathbb{R}^n,$ such that, for any $i= 1, \ldots n,$
\[
h_i(u) > \sum_{j \neq i} h_j(u) \mid - \frac{\partial V_{x, i}}{\partial u_j} \mid. 
\]
\end{assumption}
That is, there is a way to rescale utilities at $u,$ such that marginal changes in $u_j$, for $j \neq i$,  have a total effect on $V_{x, i}$ less than 1.

\begin{proposition}
Under Assumption 5, for every $x\in X$ and $u\in\mathbb{R}^{n}$,
\newline
$ (I-J_{x}(u))  $ is a P-matrix\footnote{\citet*{moylan77}}.
\end{proposition}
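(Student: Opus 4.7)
The plan is to mirror the structure of the proof of Proposition 1: derive positivity of every principal minor of $(I - J_{x}(u))$ by controlling the eigenvalues of every principal submatrix via Gershgorin's circle theorem. The twist, compared to Proposition 1, is that Assumption 5 provides only \emph{weighted} row dominance, so the first step is to rescale and reduce to the standard unweighted setting.

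Set $A = I - J_{x}(u)$; note that $A_{ii} = 1$ since $V_{i}$ does not depend on $u_{i}$. Let $D = \mathrm{diag}(h_{1}(u), \ldots, h_{n}(u))$, with $h_{i} > 0$ as the rescaling-of-utilities interpretation in the paper requires, and put $B = A D$. Then $B_{ii} = h_{i}$ and $B_{ij} = -(\partial V_{i}/\partial u_{j}) \, h_{j}$ for $j \neq i$, so Assumption 5 reads exactly $|B_{ii}| > \sum_{j \neq i} |B_{ij}|$: that is, $B$ is strictly row diagonally dominant (in the unweighted sense) with positive diagonal. For any non-empty $S \subseteq \{1, \ldots, n\}$, the principal submatrix factors as $B_{S} = A_{S} D_{S}$, and $B_{S}$ itself is strictly row diagonally dominant with positive diagonal, because deleting indices $j \notin S$ only removes non-negative terms from the off-diagonal row sums. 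Moreover $\det B_{S} = (\det A_{S}) \prod_{i \in S} h_{i}$ has the same sign as $\det A_{S}$, so showing every $\det B_{S} > 0$ is equivalent to showing every $\det A_{S} > 0$.

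The second step applies the eigenvalue argument from Proposition 1 to each $B_{S}$: by Gershgorin, every eigenvalue of $B_{S}$ lies in a disk centered at some $h_{i} > 0$ with radius strictly smaller than $h_{i}$, hence the spectrum of $B_{S}$ is contained in the open right half-plane. Its determinant, being the product of its eigenvalues, is thus a product of positive reals and/or conjugate pairs $c \pm d i$ with $c > 0$ contributing $c^{2} + d^{2} > 0$; so $\det B_{S} > 0$, whence $\det A_{S} > 0$. Since this holds for every $S$, $A = I - J_{x}(u)$ is a P-matrix.

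The only genuine subtlety is recognizing the rescaling $B = A D$ and verifying that principal submatrix extraction commutes with right-multiplication by the diagonal matrix $D$, so that weighted dominance of $A$ and unweighted dominance of every principal submatrix of $B$ coincide. After that reduction the argument is identical to the Gershgorin-plus-conjugate-pair reasoning already exploited in Proposition 1, and no new tools are needed.
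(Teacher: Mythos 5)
Your proof is correct. Note, however, that the paper itself offers no proof of this proposition --- it simply delegates the result to the cited reference (Moylan, 1977), so there is no in-paper argument to compare against; what you have written is a valid, self-contained derivation of that delegated fact. Your reduction is sound: with $A=I-J_x(u)$ (unit diagonal since $V_i$ does not depend on $u_i$) and $D=\mathrm{diag}(h_1(u),\dots,h_n(u))$, the matrix $B=AD$ is strictly row diagonally dominant with positive diagonal, principal-submatrix extraction does commute with right-multiplication by $D$, dominance is inherited by every $B_S$, and $\det B_S=\det A_S\prod_{i\in S}h_i$ transfers positivity back to $A_S$. The Gershgorin step correctly places the spectrum of each $B_S$ in the open right half-plane, and the real-matrix conjugate-pair argument (the same one used in the paper's proof of Proposition 1, though there the eigenvalue control comes from the spectral-radius hypothesis rather than from Gershgorin) gives $\det B_S>0$. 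The only point worth flagging explicitly --- which you do --- is that Assumption 5 as stated does not literally say $h(u)\in\mathbb{R}^n_{++}$; positivity of the weights is the intended reading (and is what the ``rescaling of utilities'' interpretation and the standard definition of quasi-dominant diagonal require), and without it the reduction to unweighted dominance would not go through.
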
 

\section{Examples}

Examples illustrate the results and their implications.

\begin{example}[Non-separable affection]

Now let 
\begin{align*}
V_{1}\left(  x,u_{2}\right)   &  =x\left(  1-x\right)  -2xu_{2},\\
V_{2}\left(  y,u_{1}\right)   &  =y\left(  1-y\right)  +\frac{1}{8}yu_{1},
\end{align*}
where $x,y\in\left(  0,1\right)  $. In this example, individual 1 is rather
spiteful towards individual 2, and individual 2 is mildly sympathetic towards individual
1. For each $x,y\in\left(  0,1\right)  $ the Jacobian of $V$ is%
\[
J_{\left(  x,y\right)  }=\left(
\begin{array}
[c]{cc}%
0 & -2x\\
\frac{1}{8}y & 0
\end{array}
\right),
\]
whose eigenvalues are $\pm\frac{1}{2}i\sqrt{xy}$, and its spectral radius is
therefore $\frac{1}{2}\sqrt{xy}<1$. 

The induced game is
\begin{align*}
U_{1}\left(  x,y\right)   &  =\frac{8x\left(  (1-x)-2y\left(  1-y\right)
\right)  }{8+2xy},\\
U_{2}\left(  x,y\right)   &  =\frac{y\left(  8(1-y)+x\left(  1-x\right)
\right)  }{8+2xy},
\end{align*}
with the best reply functions
\begin{align*}
\beta_{1}\left(  y\right)   &  =\frac{2\sqrt{2y^{3}-2y^{2}+y+4}-4}{y},\\
\beta_{2}\left(  x\right)   &  =\frac{\sqrt{-2x^{3}+2x^{2}+16x+64}-8}{2x},
\end{align*}
whose intersection is the Nash equilibrium
\[
x=0.246\,20,\quad y=0.503\,79,
\]
where the reaction curves are locally flat.

%
\begin{center}

[Figure 1]

\includegraphics[scale=0.2]{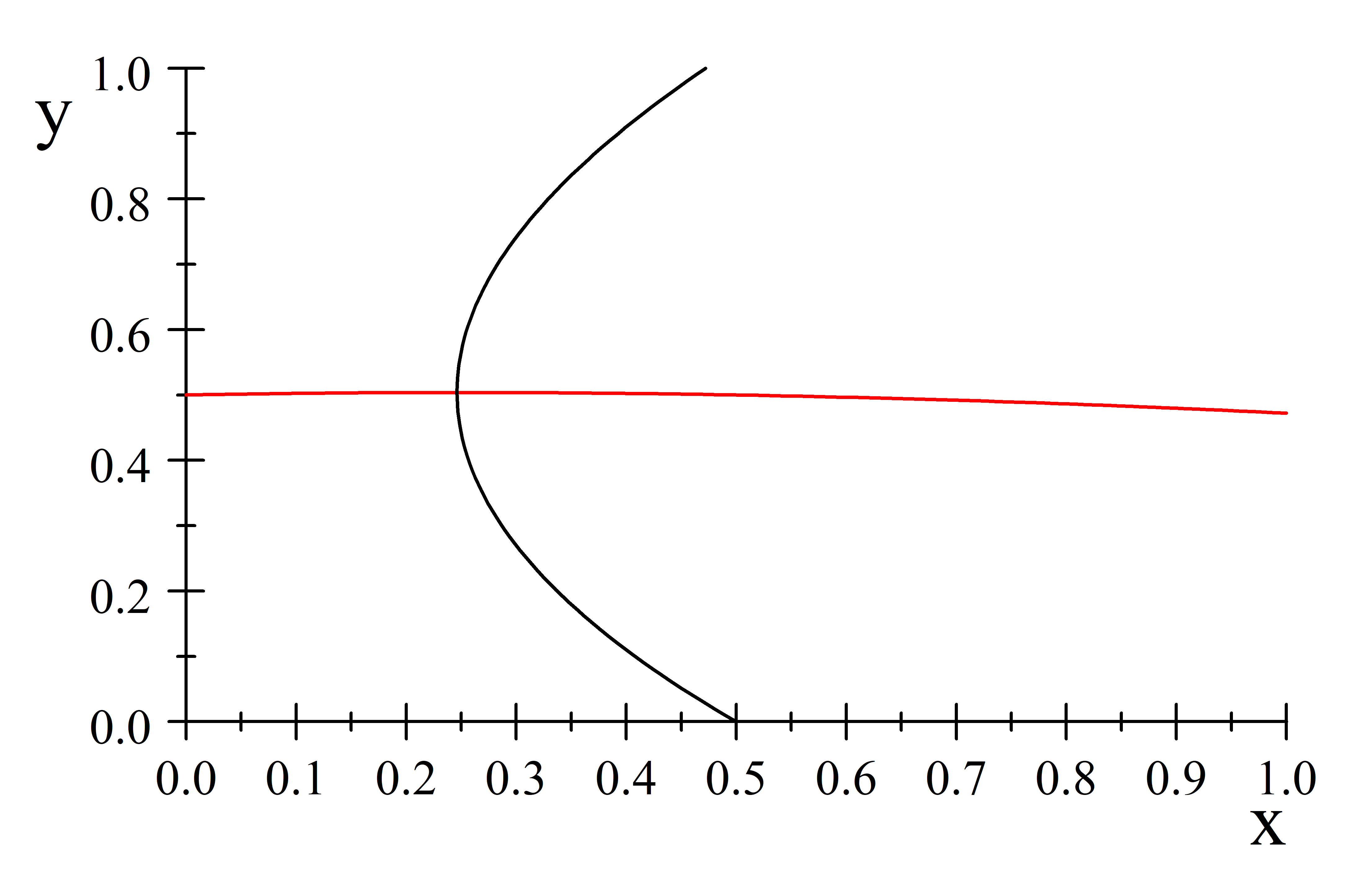}
\end{center}

For the spiteful individual 1, the Nash equilibrium is at a saddle point of his
utility function.
\begin{center}

[Figure 2]

\includegraphics[scale= 1]{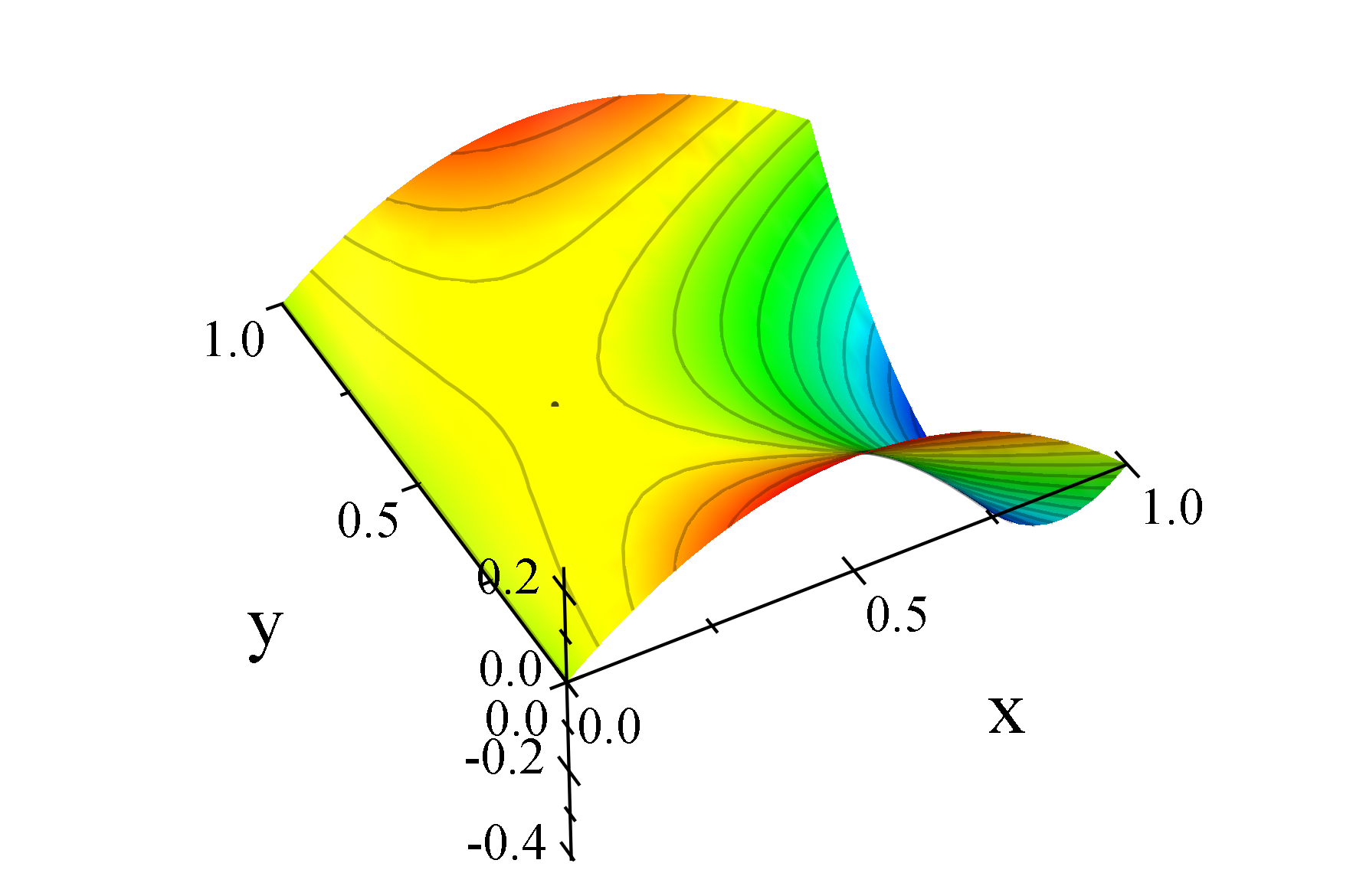}
\end{center}
For the sympathetic individual 2, the Nash equilibrium is at a hilltop\ of her
utility function.
\begin{center}

[Figure 3]

\includegraphics[scale=1]{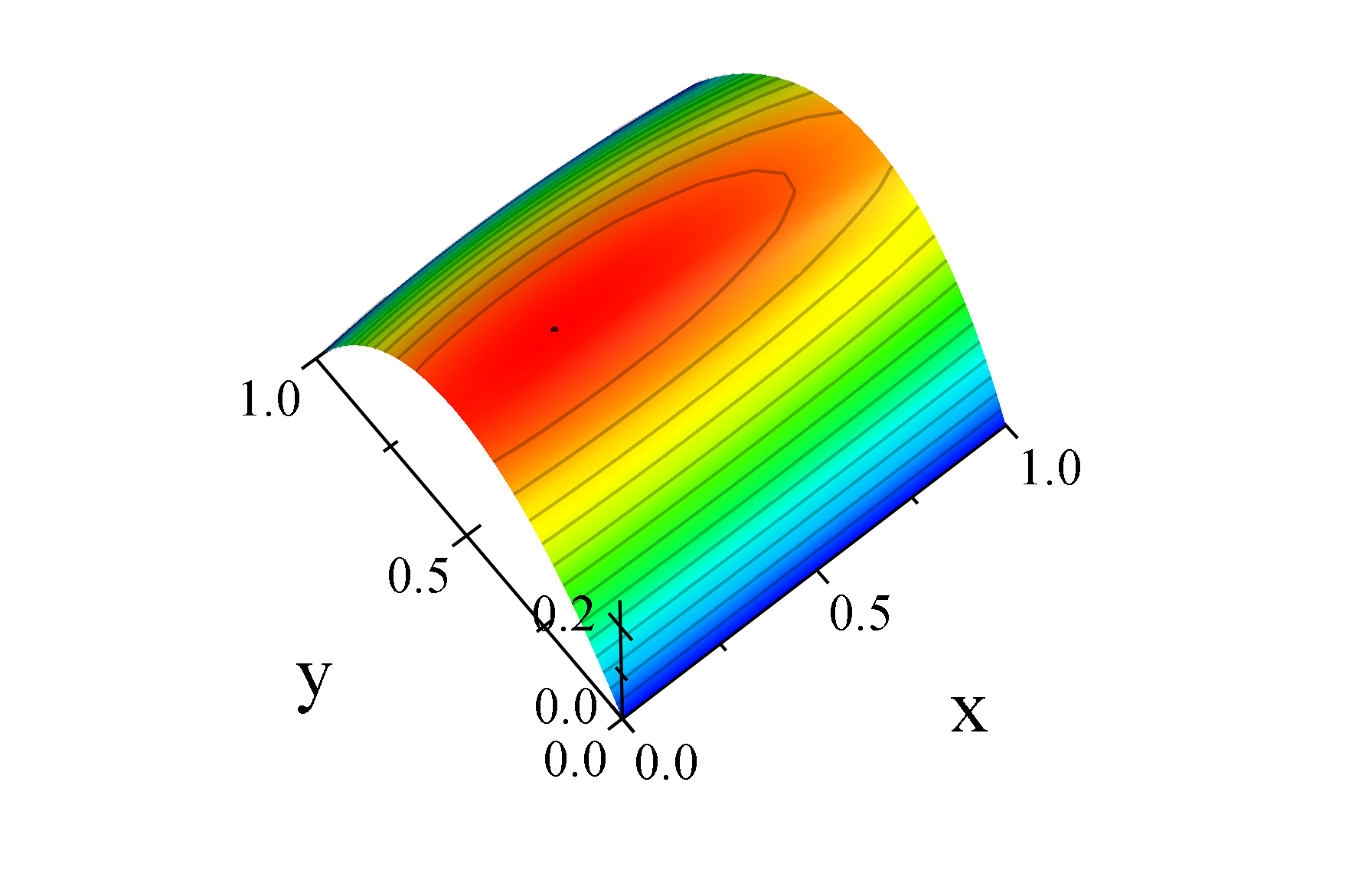}
\end{center}

\end{example}

\bigskip

\begin{example} [Shifting attitudes]

Next, 
\begin{align*}
V_{1}\left(  x,u_{2}\right)   &  =x^{2}\left(  1-x^{2}\right)  +\frac{1}%
{2}xu_{2},\\
V_{2}\left(  y,u_{1}\right)   &  =y^{2}\left(  1-y^{2}\right)  +\frac{1}%
{2}yu_{1},
\end{align*}
for $x,y\in\left(  -1,1\right)  ,$ so that each individual is sympathetic/spiteful
with positive/negative actions respectively. The Jacobian of $V$ at $\left(
x,y\right)  $ is
\[
J_{\left(  x,y\right)  }=\left(
\begin{array}
[c]{cc}%
0 & \frac{1}{2}x\\
\frac{1}{2}y & 0
\end{array}
\right),
\]
whose eigenvalues are\ $\pm\frac{1}{2}\sqrt{xy}.$ This implies that  the spectral
radius of the Jacobian is smaller than $\frac{1}{2}$. 

The induced game is
\begin{align*}
U_{1}\left(  x,y\right)   &  =\frac{4x^{2}\left(  1-x^{2}\right)
+2xy^{2}\left(  1-y^{2}\right)  }{4-xy},\\
U_{2}\left(  x,y\right)   &  =\frac{4y^{2}\left(  1-y^{2}\right)
+2yx^{2}\left(  1-x^{2}\right)  }{4-xy}.%
\end{align*}

The graphs of $U_{1}$  and $U_{2}$  are in 
Figure 4 and Figure 5.
%
%
\begin{center}

[Figure 4 ]

\includegraphics[scale=1]{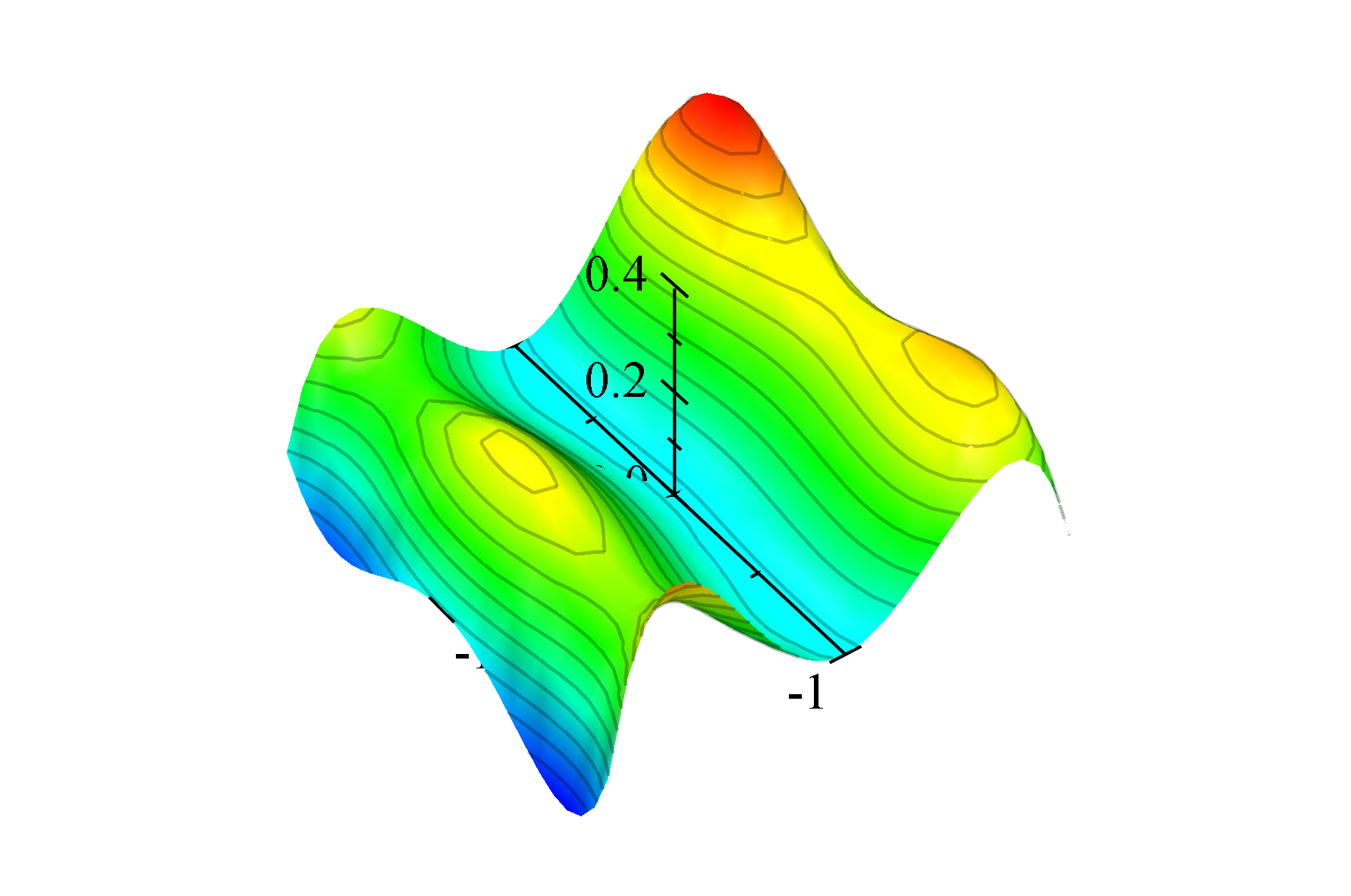}
\end{center}

\begin{center}

[Figure 5]

\includegraphics[scale=1]{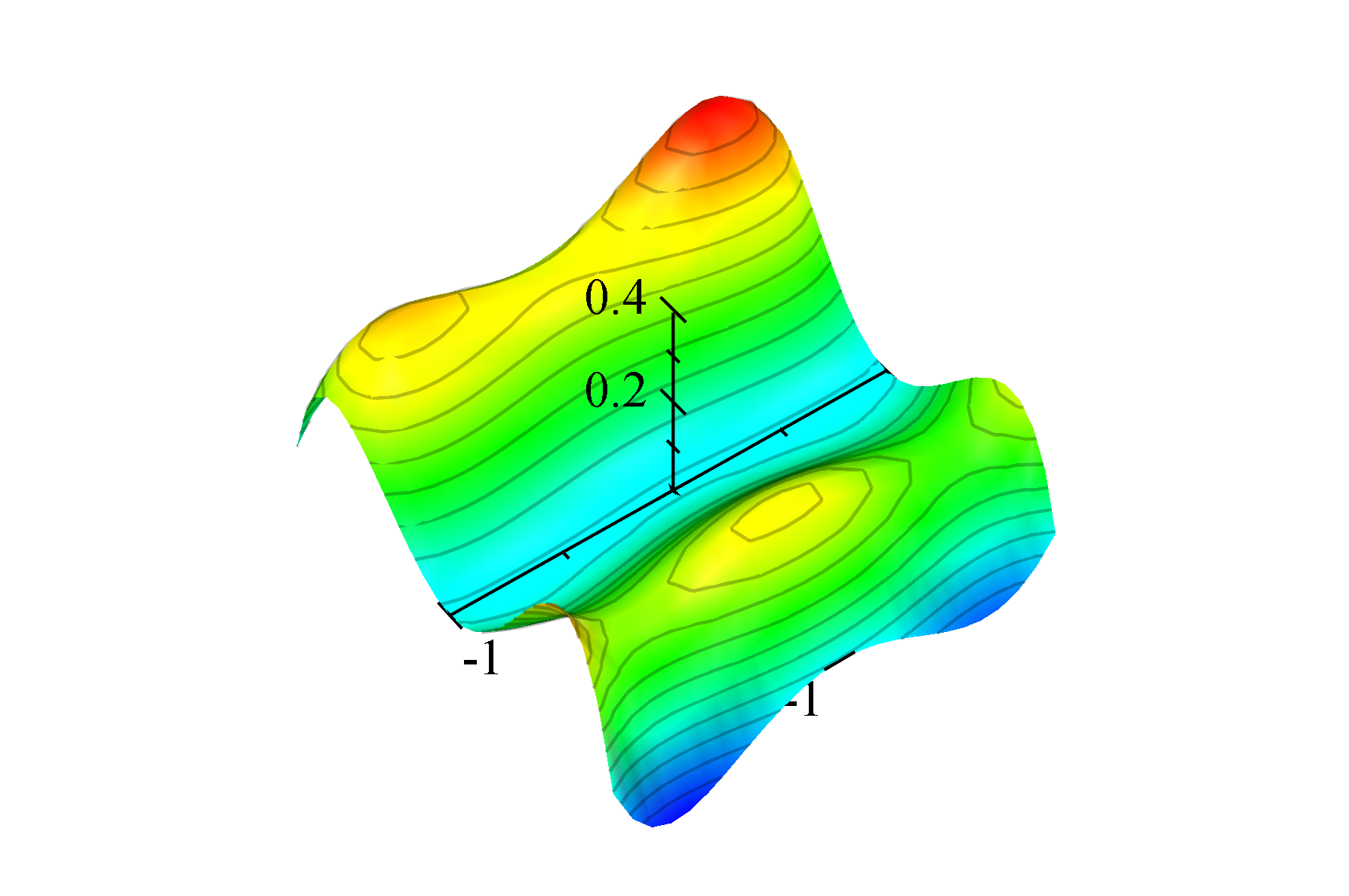}
\end{center}

\quad The unique Nash equilibrium is%
\[
x=y=0.751\,97.
\]
Both individuals are sympathetic and the Nash equilibrium is at the peak of their
utility functions. The Nash equilibrium is Pareto optimal, and it maximizes the
average of their utilities.

\bigskip

If, instead, the individuals were confined to negative actions 
\newline
$x,y\in\left(
-1,0\right)  ,$ the unique Nash equilibrium would be%
\[
x=y=-0.68266
\]
that is at a saddle point of $U_{1}$ and of $U_{2},$ and maximizes the
individuals' average utility in that quadrant but not globally. 

\bigskip
Similarly, if
individual $1$ were confined to positive actions (and thus sympathy) while individual
2 to negative actions (spite), there would be a unique Nash equilibrium within
that quadrant%
\[
x=0.724\,71,\quad y=-0.665\,76
\]
with individual $1$ at a hilltop and individual $2$ at a saddle point, maximizing the
average utility within that quadrant, but not globally.

\end{example}

\section{Economies with affective interaction}

\citet*{pearce08} showed that, in cake-eating games with positive interdependent
affect, the subgame perfect path is \textit{not} Pareto optimal. This is not
in contrast with Corollary 1, though, because in cake-eating games, the
available choices of subsequent individuals are restricted by those of the
preceding ones. Put differently, if an individual was to be assigned a very negative
utility if she tried to consume more than the remainder of the cake, the utility functions $V_{i}$ would depend on the actions of other players (and cease to be smooth). In yet other words, cake-eating games are not genuine games, but rather economies
with sequential consumption.

\bigskip

To understand the nature of the problem induced by the presence of a feasibility constraint, we consider a slightly modified version of \citet*{bergstrom89} `Love and Spaghetti' example.

Romeo and Juliet both  care about good $x$, spaghetti, and the other's happiness, and 
\[
\begin{array}{ccc}
V_{1} (x_1, u_2)& = &  \sqrt{x_1} + a u_2 ,  \\ & & \\ 
V_{2} (x_2, u_1) & = & \sqrt{x_2}  + b u_1.
\end{array}
\]

Differently from Bergstrom, we allow for negative $a$ and $b$. As in the general model, we assume that  $ab < 1$ (so here Romeo and Julliet  do not necessarily love each other, but their affective interdependence, positive or negative, is moderate).
\bigskip

We also add a second good, money, entering quasi-linearly in the utility function, and assume each member of the couple has an initial endowment $e_i = (1, M)$.

We can solve for the induced utilities $U_1 (x, m)$ and $U_2 (x, m), $ and 
\[
\begin{array}{ccc}
U_{1}(x, m)  & = &  \frac{1}{1 - ab} \: ( \sqrt{x_1} + m_1) + \frac{a}{1 - ab} \:  (\sqrt{x_2} + m_2) , \\ & & \\
U_{2}(x, m)& = & \frac{b}{1 - ab} \:  ( \sqrt{x_1} + m_1)  + \frac{1}{1 - ab} \:  (\sqrt{x_2} + m_2).
\end{array}
\]

At a \emph{competitive equilibrium of an economy with affective interaction}, each individual chooses $(x_i, m_i)$ to maximize $U_i$ taking $(x_j, m_j)$  as given, under the budget constraint: $ p_x x_i + p_m m_i = p_x + p_m M$, and prices adjust to guarantee feasibility.

If we fix $p_m=1$ (and $M$ is large enough), at the unique competitive equilibrium,
$$ \hat p_x=1, $$ 
$$\hat x_i = e_i =1, \quad i =1, 2.$$

A {\emph{benevolent non-myopic social planner}} chooses  the allocation $(x_1, x_2)$ of spaghetti to solve
\[\max_{x_1, x_2} W_{\lambda} (x_1, x_2, m) = \sum_i \lambda_i U_i(x_1, x_2, m),
\]
under  the constraint
$$ 
x_1 + x_2 = 2,
$$
leading to the first order condition
\begin{equation}\label{plannerFOC}
\frac{\sqrt{x_1}}{\sqrt{2 - x_1}}= \frac{\lambda _1 + \lambda_2 b}{\lambda_1 a + \lambda_2}.
\end{equation}

We see that, in  an economy,  moderate reciprocal affection, $ab < 1$ \emph{does not} guarantee that there exist $(\lambda _1, \lambda_2) > 0$ such that the equilibrium allocation $\hat x_1 = 1$ solves the planner problem. 

For example,  if $a=2$ and $b=1/4,$ at $\hat x_1 = 1$, (\ref{plannerFOC}) becomes $\lambda_1 = - \frac{3}{4} \lambda_2$.
Indeed, the equilibrium allocation of good, $\hat x_1 = \hat x_2=1$,  generates utilities $\hat u_1= 2\sqrt{\hat x_1} + 4 \sqrt{\hat x_2} = 6$, $\hat u_2= \frac{1}{2}\sqrt{\hat x_1} + 2 \sqrt{\hat x_2} = 2.5$, while a planner solving (\ref{plannerFOC}) with $\lambda_1 = \lambda_2 =1$ would rather choose $\tilde x_1 = 0.29$, $\tilde x_2= 2 - \tilde x_1 = 1.71$, generating utilities $\tilde u_1= 2\sqrt{\tilde x_1} + 4 \sqrt{\tilde x_2} = 6.28$ and $\tilde u_2= \frac{1}{2}\sqrt{\tilde x_1} + 2 \sqrt{\tilde x_2} = 2.87$, a Pareto improvement: the planner reallocates the good taking into account the strong love of individual $1$ towards individual $2$. 

For the specific preferences and endowments of the example,  positive welfare weights such that the competitive equilibrium maximizes social welfare, and is therefore Pareto optimal,  do exist under the stronger condition: $ab<1$  \emph{and}  $a<1$, $b<1$.  A characterization of economies with  affective interaction in which competitive equilibrium allocations are Pareto optimal remains an open problem.

\bibliography{ave-bibliography}

   	\end{document}